\setlist[enumerate]{leftmargin=.5in}
\setlist[itemize]{leftmargin=.5in}
\crefname{hypothesis}{Hypothesis}{Hypotheses}
\title{Leveraging Sharing Communities to Achieve Federated Learning for Cybersecurity\thanks{Submitted to the editors March 5, 2021; accepted for publication March 25, 2021; revised April 15, 2021.
\funding{This work was funded by the Defense Advanced Research Projects Agency under contract no.~W911NF-18-C-0019 through subcontract under the University of Virginia.}}}
\author{Frank W. Bentrem\thanks{Commonwealth Computer Research, Inc., Charlottesville, VA 
  (\email{frank.bentrem@ccri.com}, \\ \email{michael.corsello@ccri.com}, \email{joshua.palm@ccri.com}, \url{https://ccri.com}).}
\and Michael A. Corsello\footnotemark[2]
\and Joshua J. Palm\footnotemark[2]}
\newcommand*{\addFileDependency}[1]{
  \typeout{(#1)}
  \@addtofilelist{#1}
  \IfFileExists{#1}{}{\typeout{No file #1.}}
}
\newcommand*{\myexternaldocument}[1]{%
    \externaldocument{#1}%
    \addFileDependency{#1.tex}%
    \addFileDependency{#1.aux}%
}
\begin{document}

\maketitle

\begin{abstract}
  Automated cyber threat detection in computer networks is a major challenge in cybersecurity. The cyber domain has inherent challenges that make traditional machine learning techniques problematic, specifically the need to learn continually evolving attacks through global collaboration while maintaining data privacy, and the varying resources available to network owners. We present a scheme to mitigate these difficulties through an architectural approach using community model sharing with a streaming analytic pipeline. Our streaming approach trains models incrementally as each log record is processed, thereby adjusting to concept drift resulting from changing attacks. Further, we designed a community sharing approach which federates learning through merging models without the need to share sensitive cyber-log data. Finally, by standardizing data and Machine Learning processes in a modular way, we provide network security operators the ability to manage cyber threat events and model sensitivity through community member and analytic method weighting in ways that are best suited for their available resources and data.
\end{abstract}

\begin{keywords}
  cybersecurity, federated learning, streaming machine learning
\end{keywords}

\begin{AMS}
  68Q85, 68T05, 68U01
\end{AMS}

\section{Introduction}
Computers have been subject to malicious attacks for many years. However, the sophistication and impacts of those attacks have grown dramatically. Successful, timely detection of such attacks is essential to enable defenders to prevent or respond to an attack to mitigate or minimize damage. The consequences from such attacks include substantial financial losses, leaked personal data, and the release of proprietary information or state secrets. The immense human and computational resources required to analyze network logs is prohibitive and often still misses novel attacks. As a result, there is a growing need to incorporate advanced techniques such as machine learning in the cyber defender’s toolbox. However, the cyber domain presents key challenges to traditional batch machine learning techniques \cite{apruzzese2018effectiveness, conti2018cyber}. 

\vspace{5pt}
\indent \textit{Problem \thesection.1}
The continuous evolution of threat vectors requires continual learning. 

\vspace{5pt}
\indent \textit{Problem \thesection.2}
The volume and sensitivity of network log data prevents long-term retention and sharing across organizations. 

\vspace{5pt}
\indent \textit{Problem \thesection.3}
Barriers-to-entry prevent small or more vulnerable organizations from adopting complex or expensive technology. 

\vspace{5pt}
\indent \textit{Problem \thesection.4}
Feedback from network security operators is important to identify novel attacks and mitigate as quickly as possible.
\vspace{5pt}

In this paper, we present solutions to these problems through an architecture that (1) is streaming, (2) is federated, and (3) manages feedback.
The paper is organized in the following way. Our targeted constraints are in
\cref{sec:constraints}, our system architecture is in \cref{sec:architecture}, the implemented machine learning algorithms
are in \cref{sec:algorithms}, and the summary follows in
\cref{sec:summary}.

\section{Targeted Constraints}
\label{sec:constraints}

Individual organizations that train models to recognize cyber threats are severely data limited, however, privacy concerns prevent them from directly sharing raw network data logs. Even anonymized data may collectively reveal sensitive information. We therefore propose a federated learning system \cite{konevcny2016federated} where trained standardized models are shared across organizations through their parameters only without sharing any raw data. The models are then merged in a way to improve performance compared to individually trained models. Details on the sharing and merging of trained models are provided in \cref{sec:algorithms}.

In order to entice a large group of organizations to participate in a collaborative learning enterprise, we lower barriers-to-entry by minimizing the computational requirements regarding platform, bandwidth, processing power, and storage. Our design choices were guided by this constraint, where we preferred small algorithms, minimal sharing, and streaming processing. Some modifications to the machine learning algorithms designed for traditional batch processing are needed to enable streaming learning. (For a similar approach see\cite{2020river}.)

\section{System Architecture}
\label{sec:architecture}
To achieve the simultaneous goals of supporting real-time analytics of streaming log data and ad hoc offline analytics, we propose a simple directed acyclic graph (DAG) computational model. This model assumes a well-defined data model to facilitate “pluggable” analytics and multi-organizational data federation. The compute flow is based upon a small number of autonomous boxes that are connected in a specific, simple way as illustrated in \cref{fig:a}.
\begin{figure}[htbp]
  \centering
  \label{fig:a}\includegraphics[width=5.4in]{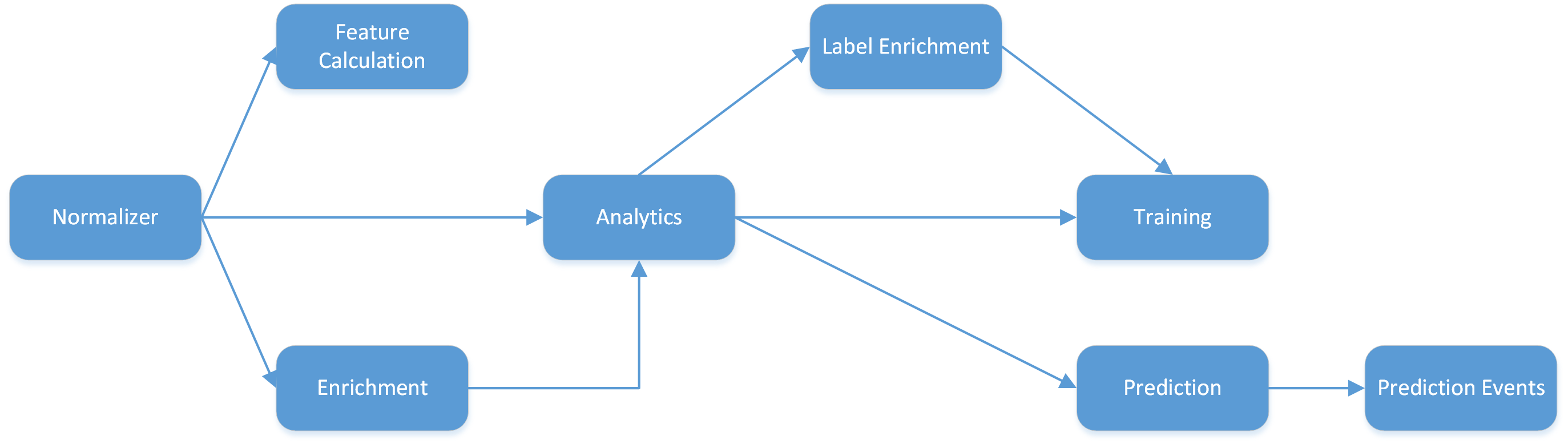}
  \caption{Core computation DAG.}
  \label{fig:computedag}
\end{figure}

The flow of data through the DAG ensures data at each stage is consistent and provides guarantees of how, when and where data is both shared and integrated. Sharing occurs at “stores” where sharable data is integrated across organizations, as shown in \cref{fig:dagwithstores}, in the live stream using well-known model combiners specific to each analytic shared. In this way, the development of a new analytic involves the creation of a local training processor, a model combining routine, and a prediction processor.
\begin{figure}[htbp]
  \centering
  \label{fig:b}\includegraphics[width=6.15in]{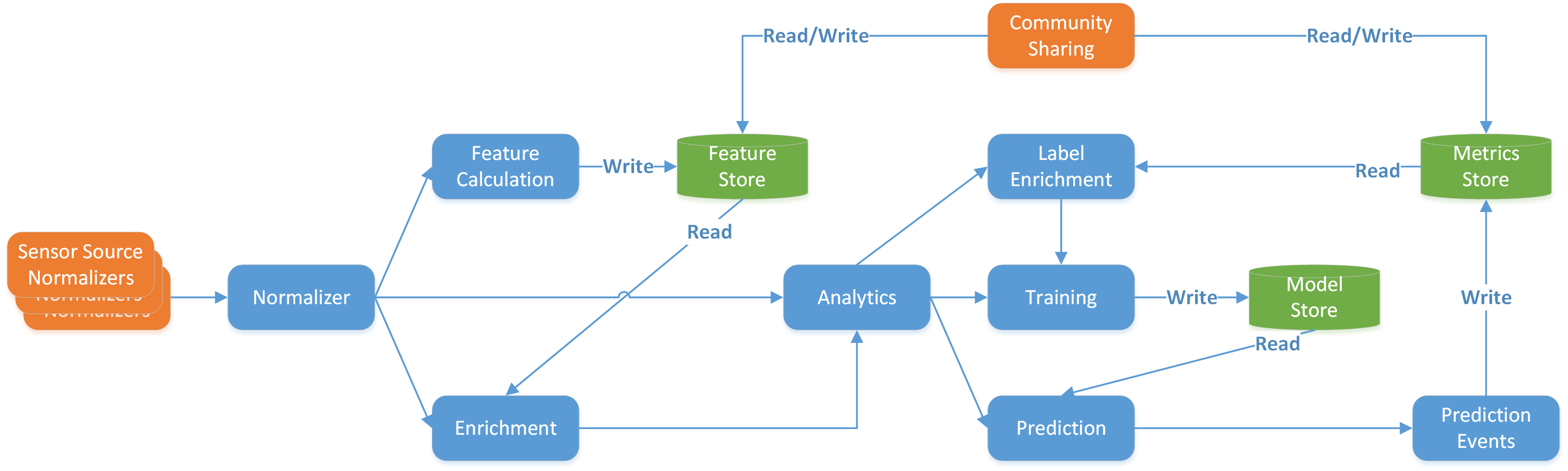}
  \caption{Computation DAG extended with feedback stores where sharing occurs.}
  \label{fig:dagwithstores}
\end{figure}
Extending the sharing through the use of communities will enable data to be aggregated across community members and then re-shared as the “community consensus" for further integration at member sites.

\section{Machine Learning Algorithms}
\label{sec:algorithms}

Finally, any number of prediction algorithms can be plugged into the streaming framework along with methods to merge the separately trained models. We built a prototype for our described system using three example classes of machine learning algorithms, Example \thesection.1 (neural network), Example \thesection.2 (naive Bayes), and Example \thesection.3 (Random Forest), which illustrate how a large number of different classification algorithms might be implemented. The input data for the machine learning algorithms were obtained using existing raw HTTP data logs that were preprocessed to obtain 81-dimensional vectors $\mathbf x$ (features), most of which were processed according to the method described by Oprea, et al.\cite{oprea2018made} Limited malicious labels were derived from \href{www.virustotal.com}{VirusTotal} and \href{www.fireeye.com}{FireEye\texttrademark} where available, and the host traffic ranking was used as a proxy for benign labels. The streaming federated algorithm with scheduled model sharing is described in \cref{alg:ml_algorithm}.

\begin{algorithm}
\caption{Streaming federated algorithm}
\label{alg:ml_algorithm}
\begin{algorithmic}
\WHILE{\textbf{not} past time for scheduled sharing}
\STATE{Receive HTTP record $R$}
\STATE{Preprocess to get input vector $\mathbf{x}$}
\IF{$R$ contains label information}
\STATE{Extract labels $\mathbf{y}$}
\STATE{Perform training iteration with $\mathbf{x}\xrightarrow{}\mathbf{y}$}
\ELSE
\STATE{Perform prediction on $\mathbf{x}$}
\ENDIF
\ENDWHILE
\STATE{Exchange model parameters with collaborators}
\STATE{Merge models}
\STATE{Repeat from top}
\end{algorithmic}
\end{algorithm}

\vspace{0.25in}

\indent \textit{Example \thesection.1} (Neural Network)
To maintain a small model size, for our neural network model, we avoided deep learning and implemented a multilayer perceptron with five hidden layers of size [64, 32, 16, 8, 4] and output size two for the malicious and benign scores. To merge neural network models we perform a weighted average for the weights and biases of each node in the network. So for each weight $w_i$ and bias $b_i$, where $i$ is the model index, we obtain the merged weight $w^\prime$ and bias $b^\prime$ through the weighted averages
\begin{equation}
    w^\prime = \langle \mathbf{a}, \mathbf{w} \rangle \hspace{10pt} \text{and} \hspace{10pt} b^\prime = \langle \mathbf{a}, \mathbf{b} \rangle,
\end{equation}
where $\mathbf{a}$ is the (averaging) weights for the shared models with $||\mathbf{a}||_1 = 1$, and $\mathbf{w}$ and $\mathbf{b}$ are the the weights and biases for all shared models at a given node.

\vspace{5pt}
\indent \textit{Example \thesection.2} (Naive Bayes)
For the Naive Bayes algorithm, we generate two histograms $\mathbf{h}_{k,i}$ for each feature $i$, one histogram for the benign labels and one for the malicious labels, where $k \in \{\text{benign, malicious}\}$. Let $N_k = ||\mathbf{h}_{k,i}||_1$ represent the cumulative numbers of benign and malicious records in the histograms. Then the benign and malicious likelihoods $\mathcal{L}_k$ and evidence $\mathcal{E}$ for each feature $i$ are given by
\begin{equation}
    \mathcal{L}_{k,i} = \frac{\mathbf{h}_{k,i}}{N_k} \hspace{10pt} \text{and}  \hspace{10pt} \mathcal{E}_i = \frac{\sum_k \mathbf{h}_{k,i}}{\sum_k N_k},
\end{equation}
respectively.
The probabilities $p(k \mid \mathbf{x})$ for a log record with feature vector $\mathbf{x} = (x_1, x_2, \ldots, x_n)$ being benign ($k = \text{benign}$) or malicious ($k = \text{malicious}$)  are given by
\begin{equation}
    \label{eq:prob}
        p(k \mid \mathbf{x}) = \frac{N_k}{\sum_k N_k} \prod_{i=1}^n \frac{\mathcal{L}_{k,i}(x_i)}{\mathcal{E}_i(x_i)},
\end{equation}
where $\mathcal{L}_{k,i}(x_i)$ are the likelihoods and $\mathcal{E}_i(x_i)$ is the evidence for the histogram bin corresponding to the value $x_i$ of feature $i$. The proof for \cref{eq:prob} is given in \cref{sec:nb_proof}. Model merging is accomplished by simply summing the histograms from the shared models.

\vspace{5pt}
\indent \textit{Example \thesection.3} (Random Forest)
As an example of an ensemble model, the Random Forest algorithm implements some number $m$ of constituent models that are combined for prediction. The Random Forest algorithm is trained in the normal way. To merge the shared ensembles, we select a sampling of constituent models from each shared ensemble so that we maintain $m$ models in the merged ensemble. 

\section{Summary}
\label{sec:summary}

We presented an automated cyber threat detection system that leverages sharing communities for collaborative, federated learning with a streaming architecture. The computation DAG addresses the need for data privacy and low barriers-to-entry. We completed a prototype with streaming machine learning algorithms with the capability to merge shared models. The results from our internal testing demonstrate the feasibility of our approach and the effectiveness of sharing and merging shared models, however a detailed performance analysis is contingent on more robust labeling.


\section*{Acknowledgments}
We would like to acknowledge the University of Virginia PCORE-CHASE team for their support.

\bibliographystyle{siamplain}
\bibliography{references}

\includepdf[pages=-]{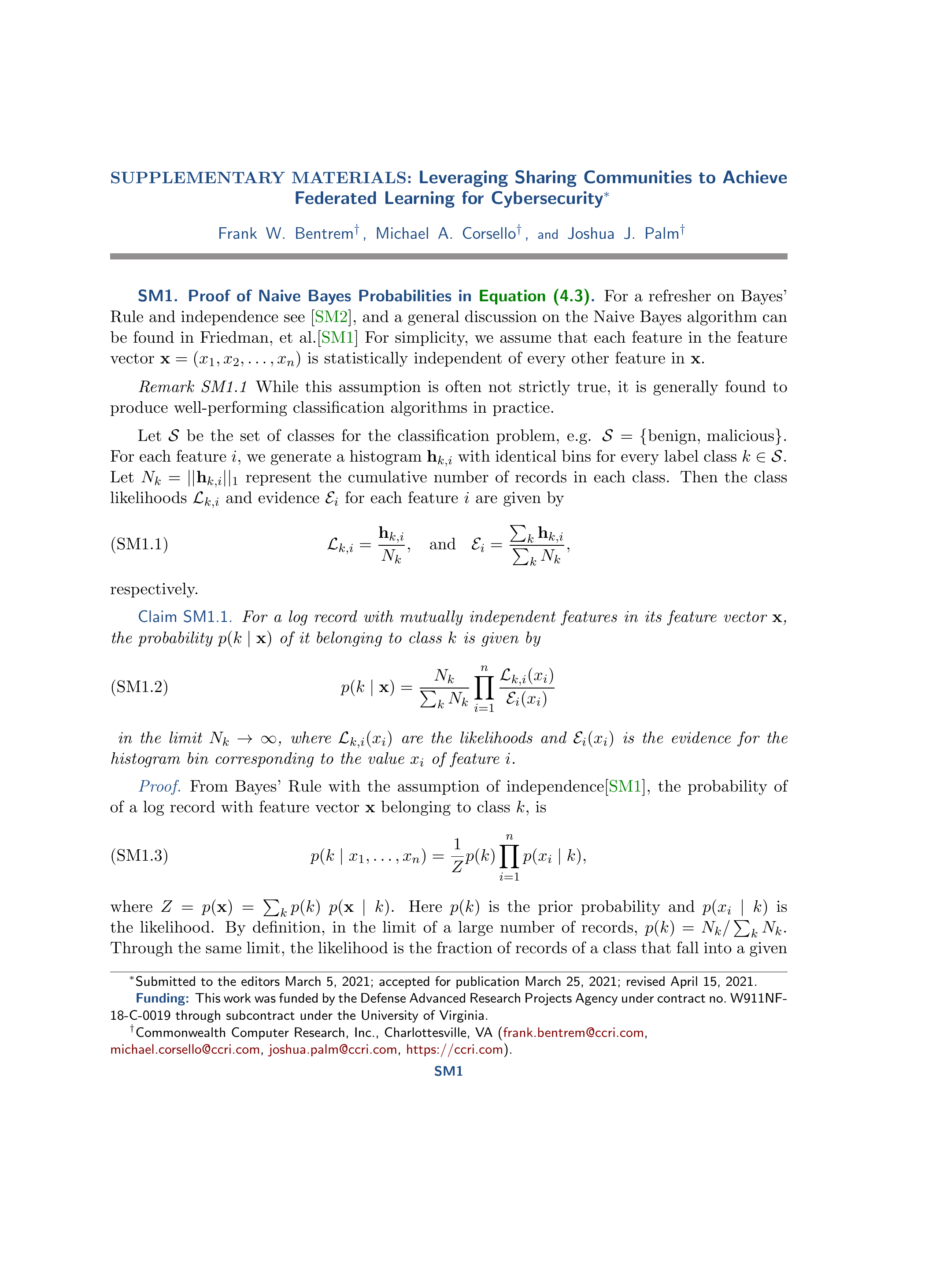}

\end{document}


\maketitle

\section{Proof of Naive Bayes Probabilities in \Cref{eq:prob}}
\label{sec:nb_proof}

For a refresher on Bayes' Rule and independence see \cite{metcalf2016cybersecurity}, and a general discussion on the Naive Bayes algorithm can be found in Friedman, et al.\cite{friedman2001elements} For simplicity, we assume that each feature in the feature vector $\mathbf{x}=(x_1, x_2, \ldots, x_n)$ is statistically independent of every other feature in $\mathbf{x}$. 

\vspace{5pt}
\indent \textit{Remark \thesection.1}
While this assumption is often not strictly true, it is generally found to produce well-performing classification algorithms in practice.
\vspace{5pt}

Let $\mathcal{S}$ be the set of classes for the classification problem, e.g. $\mathcal{S} = \text{\{benign, malicious\}}$. For each feature $i$, we generate a histogram $\mathbf{h}_{k,i}$ with identical bins for every label class $k \in \mathcal{S}$. Let $N_k = ||\mathbf{h}_{k,i}||_1$ represent the cumulative number of records in each class. Then the class likelihoods $\mathcal{L}_{k,i}$ and evidence $\mathcal{E}_i$ for each feature $i$ are given by
\begin{equation}
    \mathcal{L}_{k,i} = \frac{\mathbf{h}_{k,i}}{N_k}, \hspace{10pt} \text{and} \hspace{10pt} \mathcal{E}_i = \frac{\sum_k \mathbf{h}_{k,i}}{\sum_k N_k},
\end{equation}
respectively.

\begin{claim}
For a log record with mutually independent features in its feature vector $\mathbf{x}$, the probability $p(k \mid \mathbf{x})$ of it belonging to class $k$ is given by
\begin{equation} \label{eq:nb_prob}
    p(k \mid \mathbf{x}) = \frac{N_k}{\sum_k N_k} \prod_{i=1}^n \frac{\mathcal{L}_{k,i}(x_i)}{\mathcal{E}_i(x_i)}
\end{equation}\label{clm:prob}
in the limit $N_k \rightarrow \infty$, where $\mathcal{L}_{k,i}(x_i)$ are the likelihoods and $\mathcal{E}_i(x_i)$ is the evidence for the histogram bin corresponding to the value $x_i$ of feature $i$.
\end{claim}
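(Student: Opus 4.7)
The plan is to unfold $p(k \mid \mathbf{x})$ with Bayes' rule and then recognize each of the three resulting factors as the limit of an empirical ratio of histogram counts. I would begin by writing
\[
p(k \mid \mathbf{x}) = \frac{p(\mathbf{x} \mid k)\, p(k)}{p(\mathbf{x})},
\]
and then, invoking the stated independence of the features, factor both the class-conditional density and the marginal across features as $p(\mathbf{x} \mid k) = \prod_{i=1}^n p(x_i \mid k)$ and $p(\mathbf{x}) = \prod_{i=1}^n p(x_i)$.

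Next I would identify the three kinds of atomic probability with their histogram-based sample estimators. The class prior $p(k)$ is estimated by $N_k / \sum_{k'} N_{k'}$, the per-feature class likelihood $p(x_i \mid k)$ by $\mathbf{h}_{k,i}(x_i)/N_k$, and the per-feature marginal $p(x_i)$ by $\sum_{k'} \mathbf{h}_{k',i}(x_i) / \sum_{k'} N_{k'}$; the latter two are exactly $\mathcal{L}_{k,i}(x_i)$ and $\mathcal{E}_i(x_i)$ as defined immediately above the claim. Substituting these expressions into Bayes' rule and pulling the prior outside the product produces \eqref{eq:nb_prob} essentially by inspection.

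The main obstacle, and the reason the statement requires the $N_k \to \infty$ qualifier, is justifying the replacement of each true probability by its empirical frequency. For this I would apply the strong law of large numbers to the i.i.d.\ samples populating each histogram bin, so that each relative frequency converges almost surely to the corresponding bin probability; continuity of multiplication and division then lets the limit pass through the finite product of $n+1$ factors, provided $\mathcal{E}_i(x_i)$ stays bounded away from zero, which is automatic whenever $x_i$ lies in the support of the underlying distribution. A brief sanity check that $\sum_k p(k \mid \mathbf{x}) = 1$ holds in the limit would close the argument and confirm that no extra normalization is needed beyond what \eqref{eq:nb_prob} already exhibits.
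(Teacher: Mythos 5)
Your proposal is correct and follows essentially the same route as the paper's proof: Bayes' rule, factorization of both $p(\mathbf{x}\mid k)$ and $p(\mathbf{x})$ via the independence assumption, and identification of the prior, likelihoods, and evidence with their empirical histogram ratios in the large-sample limit. The only difference is that you make explicit the law-of-large-numbers justification for passing to the limit, which the paper treats as definitional; this is a welcome clarification but not a different argument.
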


\begin{proof}
From Bayes' Rule with the assumption of independence\cite{friedman2001elements}, the probability of of a log record with feature vector $\mathbf{x}$ belonging to class $k$, is
\begin{equation}
p(k \mid x_{1},\ldots ,x_{n})={\frac {1}{Z}}p(k)\prod _{i=1}^{n}p(x_{i}\mid k),
\end{equation}
where $Z=p(\mathbf {x} )=\sum _{k}p(k)\ p(\mathbf {x} \mid k)$. Here $p(k)$ is the prior probability and $p(x_i\mid k)$ is the likelihood. By definition, in the limit of a large number of records, $p(k) = N_k / \sum_k N_k$. Through the same limit, the likelihood is the fraction of records of a class that fall into a given histogram bin, so that $p(x_i \mid k) = \mathbf{h}_{k,i}(x_i) / N_k = \mathcal{L}_{k,i}(x_i)$, where $\mathbf{h}_{k,i}(x_i)$ is the histogram bin value for the feature value $x_i$. \Cref{eq:nb_prob} is then equivalent to
\begin{equation} \label{eq:nb_prob2}
    p(k \mid \mathbf{x}) = Z^{-1} \frac{N_k}{\sum_k N_k} \prod_{i=1}^n \mathcal{L}_k(x_i).
\end{equation}
Again under the assumption of mutual independence,
\begin{equation} \label{eq:evidence} 
    p(\mathbf{x}) = \prod_{i=1}^n p(x_i), \hspace{10pt} \text{so} \hspace{10pt} Z = \prod_{i=1}^n \frac{\sum_k \mathbf{h}_{k,i}(x_i)}{\sum_k N_k} = \prod_{i=1}^n \mathcal{E}(x_i).
\end{equation}
 Substituting \cref{eq:evidence} into \cref{eq:nb_prob2} results in
\begin{equation}
    p(k \mid \mathbf{x}) = \frac{N_k}{\sum_k N_k} \prod_{i=1}^n \frac{\mathcal{L}_k(x_i)}{\mathcal{E}(x_i)},
\end{equation}
which is the result claimed.
\end{proof}

\bibliographystyle{siamplain}
\bibliography{references}